\def\ps@headings{%
\def\@oddhead{\mbox{}\scriptsize\rightmark \hfil \thepage}%
\def\@evenhead{\scriptsize\thepage \hfil \leftmark\mbox{}}%
\def\@oddfoot{}%
\def\@evenfoot{}}
\newtheorem{theorem}{Theorem}
\newtheorem{definition}{Definition}
\newtheorem{property}{Property}
\DeclareMathOperator*{\argmin}{arg\,min}
\begin{document} 
%
\IEEEoverridecommandlockouts
\selectlanguage{english}

\title{Multi-resource Energy-efficient Routing in Cloud Data Centers with Networks-as-a-Service\thanks{This work was supported in part by Ministerio de Economia y Competitividad grant TEC2014- 55713-R, Regional Government of Madrid (CM) grant Cloud4BigData (S2013/ICE-2894, co-funded by FSE \& FEDER), NSF of China grant 61520106005, and European Commission H2020 grants ReCred and NOTRE.}}



%

\author{
\IEEEauthorblockN{
Lin Wang\IEEEauthorrefmark{1}\IEEEauthorrefmark{4},
Antonio Fern\'andez Anta\IEEEauthorrefmark{3},
Fa Zhang\IEEEauthorrefmark{2},
Jie Wu\IEEEauthorrefmark{5}
Zhiyong Liu\IEEEauthorrefmark{1}\IEEEauthorrefmark{6}
}
\IEEEauthorblockA{
\IEEEauthorrefmark{1}Beijing Key Laboratory of Mobile  Computing and Pervasive Device, ICT, CAS, China}  
\IEEEauthorblockA{
\IEEEauthorrefmark{2}Key Lab of Intelligent Information Processing, ICT, CAS, China} 
\IEEEauthorblockA{
\IEEEauthorrefmark{3}IMDEA Networks Institute, Spain}
\IEEEauthorblockA{
\IEEEauthorrefmark{4}University of Chinese Academy of Sciences, China}
\IEEEauthorblockA{
\IEEEauthorrefmark{5}Department of Computer and Information Sciences, Temple University, USA}
\IEEEauthorblockA{
\IEEEauthorrefmark{6}State Key Laboratory for Computer Architecture, ICT, CAS, China}
}


\maketitle

\begin{abstract}
With the rapid development of software defined networking and network function virtualization, researchers have proposed a new cloud networking model called Network-as-a-Service (NaaS) which enables both in-network packet processing and application-specific network control. In this paper, we revisit the problem of achieving network energy efficiency in data centers and identify some new optimization challenges under the NaaS model. Particularly, we extend the energy-efficient routing optimization from single-resource to multi-resource settings. We characterize the problem through a detailed model and provide a formal problem definition. Due to the high complexity of direct solutions, we propose a greedy routing scheme to approximate the optimum, where flows are selected progressively to exhaust residual capacities of active nodes, and routing paths are assigned based on the distributions of both node residual capacities and flow demands. By leveraging the structural regularity of data center networks, we also provide a fast topology-aware heuristic method based on hierarchically solving a series of vector bin packing instances. Our simulations show that the proposed routing scheme can achieve significant gain on energy savings and the topology-aware heuristic can produce comparably good results while reducing the computation time to a large extent.

\end{abstract}

\section{Introduction}
\label{sec:intro}

With the widespread adoption of cloud computing, enormous large-scale data centers have been deployed for companies like Google, Microsoft, and Amazon, to provide online services including searching and social networking. Generally speaking, data centers are consolidated facilities holding tens of thousands servers that are connected by a well-structured network termed data center network (DCN). Despite some designs that rely on specialized hardware and communication protocols, most of the DCN architectures leverage commodity Ethernet switches and routers to interconnect servers, and thus are compatible with TCP/IP applications.

As inter-node communication bandwidth is the principal bottleneck in data centers, there has been a large body of work on optimizing the performance of DCNs (e.g., \cite{Al-Fares_Radhakrishnan-2010, Benson_Anand-2011}). However, in order to apply these proposals to production DCNs, a lot of effort has to be undertaken, including both hardware- and software-end modifications. This is due to the specific deployment settings designed for the routing and forwarding protocols used in current data centers. As a result, incremental implementations are usually not achievable and significant effort has to be made for every single design.

The situation has been changed with the evolution of network architecture. On the one hand, researchers proposed to decouple control plane from data plane, which enables rapid innovation in network control. This idea then naturally led to the advent of Software Defined Networking (SDN). Instead of having all network nodes to run the routing protocol in a distributed manner, SDN abstracts the network control functionality to a logically centralized controller. The routing decisions are then made by the controller and will be pushed to network nodes through a well-defined Application Programming Interface (API) such as OpenFlow \cite{of}. As a result, the optimization work can be totally implemented in the controller, which needs only very basic software modification in the event of network changes. On the other hand, the innovation in data plane has also been sped up by the technology called Network Function Virtualization (NFV) where packets are handled by software-based entities on general-purpose servers with network functions virtualized.

\begin{figure}[!t]
\centering
\includegraphics[scale=0.31]{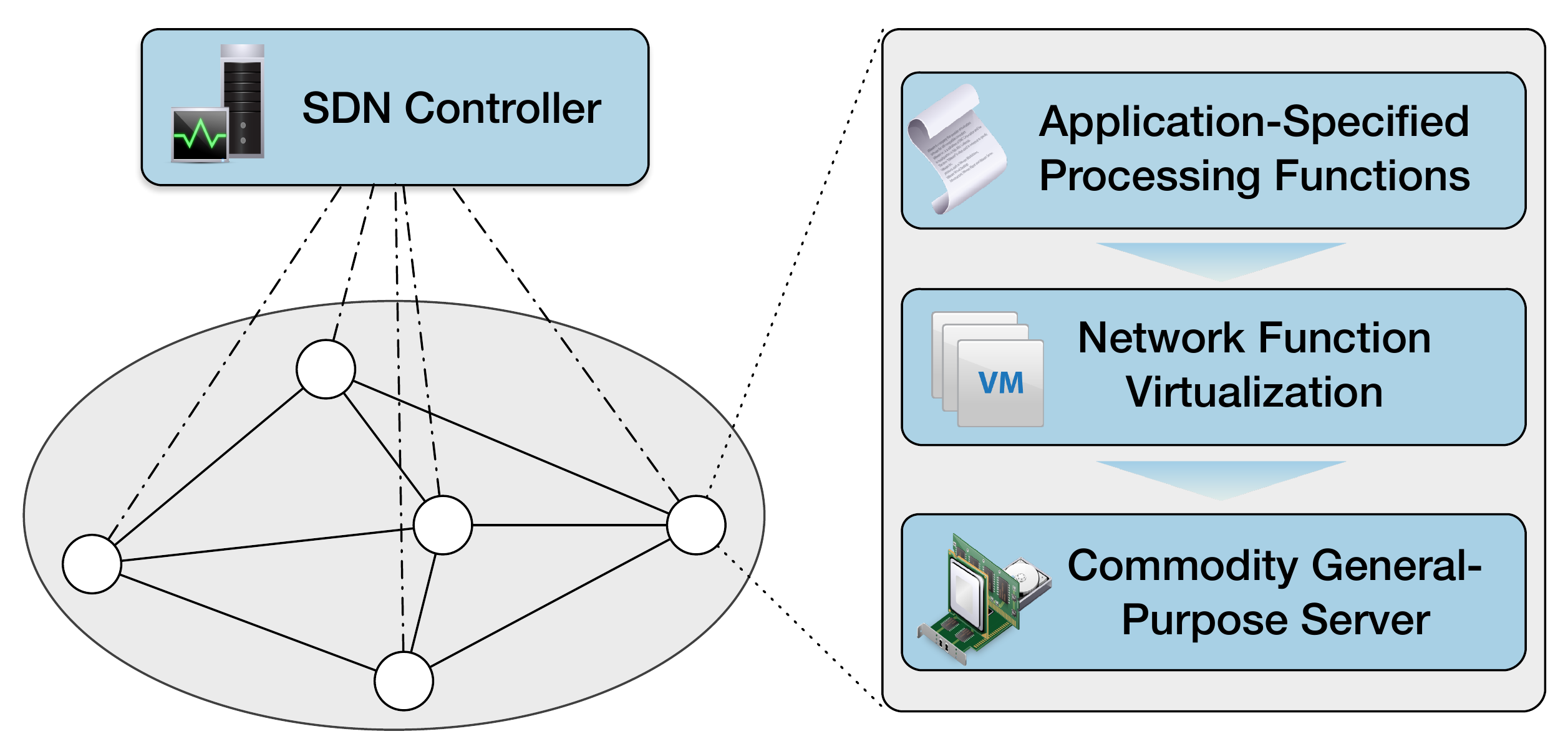}
\caption{\label{fig:naas}An overview of a general NaaS implementation in a typical cloud data center. The control plane is separated from the data plane by the SDN while the data plane can be customized through NFV.}
\vspace{-0.4cm}
\end{figure}

Taking advantage of the advancement of both control plane and data plane in networks, a new cloud networking model called \emph{Network-as-a-Service} (NaaS) has been recently proposed \cite{Benson_Akella-2011, Costa-Migliavacca}. An overview of a general NaaS implementation can be found in Fig.~\ref{fig:naas}.  In the NaaS model, packet forwarding decisions are implemented based on specific application needs. Moreover, NaaS enables the design of in-network packet modification and thus in-network services, such as data aggregation, stream processing or caching can be specified by upper-layer applications. Based on this new networking model, several working examples have been studied, including in-network aggregation for big data applications \cite{Costa_Donnelly-2012}.

Although there is still a range of research challenges for a widespread adoption of NaaS, we observe that some fundamental problems have emerged from this new model. It is recognized that no matter what networking model is employed, problems such as load balancing and energy saving always possess their importance. However, compared to the traditional packet-forwarding-centric model, NaaS brings new challenges to these problems by allowing in-network packet operations, making existing solutions not efficient or even not applicable to these problems any more.

Particularly, we revisit the problem of achieving network energy efficiency in data centers and identify some new challenges under the NaaS model. With packets being processed by general-purpose servers, energy-related issues become more prominent. The energy saving problem in DCNs has been widely studied and most proposals are based on the general approach of consolidating network flows and turning off unused network elements (e.g., \cite{Heller-Seetharaman, Vasic-Bhurat, Wang-Yao, Wang-Zhang-jsac}). In packet forwarding networks, link utilization is the most important criterion for flow consolidation. However, this is no longer valid under the NaaS model, where a network node can be congested not only by data communications, but also by the overloading of other resources such as processing units or memory. Without considering other resources, a link utilization oriented consolidation of flows may lead to very bad resource saturation at some network nodes and to serious network instability. Therefore, it is essential to take into account multiple resources when making routing decisions under the NaaS model. To the best of our knowledge, this is the first research attempt towards multi-resource traffic engineering.


The main contributions of this paper are as follows: $\mathit{i})$ we identify new research challenges for conventional optimization problems under the NaaS model, and characterize the network energy saving problem through a detailed model with complexity analyzed; $\mathit{ii})$ we propose a greedy routing scheme where path selection is done based on the distributions of node residual capacities and flow demands; $\mathit{iii})$ by utilizing the structural property of DCNs, we provide a topology-aware heuristic which can accelerate problem solving while producing comparably good results; $\mathit{iv})$ we validate the efficiency of the proposed algorithms by extensive simulations and show that significant energy efficiency gain in NaaS-enabled DCNs can be achieved by the techniques proposed in this paper.

The remainder of this paper is organized as follows. Section~\ref{sec:related} summarizes the related work. Section~\ref{sec:problem} gives the model and the definition of the problem, as well as some complexity analysis. Section~\ref{sec:alg} proposes a greedy routing scheme and Section~\ref{sec:heuristic} presents a topology-aware heuristic. Section~\ref{sec:simu} examines the performance of the proposed algorithms by simulations. Section~\ref{sec:conc} concludes the paper.


\section{Related Work}
\label{sec:related}

In this section, we revise from several viewpoints the existing work related to our study.

{\bf Software defined networking.} 
The high-level coupling of the control plane and the data plane in traditional networks brings very high complexity to network management and leads to a very slow pace of development and evolution of network functionalities due to the reliance on proprietary hardware. SDN is forced to solve these problems by changing the design and management of a network in the following two ways: in an SDN, the control plane and the data plane are clearly separated; the control plane is logically consolidated. The control plane thus exercises a full view of the network and can be implemented with a single software control program. Through a well-defined API, the control plane carries out direct control to push decisions over multiple data-plane elements in the network. The logically centralized control can facilitate most network applications including network virtualization \cite{Sherwood-Gibb}, server load balancing \cite{Wang-Butnariu}, and energy-efficient networking \cite{Heller-Seetharaman, Wang-Zhang-jsac}, which would require enormous efforts to be implemented in a totally distributed environment.

{\bf Software packet processing.} Recently, researchers have argued for building evolvable networks, whose functionality changes with the needs of its users and is not tied to particular hardware vendors \cite{Fall_Iannaccone-2011}. This is called general-purpose networking, where a network-programming framework is run on top of commodity general-purpose hardware.
On the one hand, there have been several research prototypes demonstrating that general-purpose hardware is capable of high-performance packet processing when packets are subjected to single particular type of processing, such as IP forwarding \cite{Fall_Iannaccone-2011} or cryptographic operations \cite{Jang_Han-2011}. It has also been shown in \cite{Dobrescu_Argyraki-2012} that such a software packet processing platform can achieve predictable performance while running various packet-processing applications and serving multiple clients with different needs. On the other hand, Niccolini \emph{et al.} \cite{Niccolini-Iannaccone-2012} developed software mechanisms that exploit the underlying hardware's power management features for more energy-efficient packet processing in software routers.

{\bf Network-as-a-Service.} With the development of SDN and software packet processing, a new networking model called NaaS has been recently proposed \cite{Benson_Akella-2011, Costa-Migliavacca}. Under this model, the network is conducted based on an SDN implementation where a centralized controller takes charge of flow management, including routing paths assignment and network functions interposition. The network is comprised of general-purpose servers with multiple network ports connected by high-speed links. In each node in the network, network functions for packet processing are virtualized and can be invoked through the controller by upper-layer applications according to their needs. 
Several research attempts have already been made to adopt NaaS in real date centers \cite{Costa_Donnelly-2012}.

{\bf Energy-efficient DCN.} The topic of achieving energy-efficient data center networks has been extensively studied. Research efforts are concentrated on the following two categories. One is designing new architectures with less network devices involved while providing similar end-to-end connectivity \cite{Abts-Marty}. The other is applying traffic engineering techniques to consolidate network flows and switch off unused network elements (switches or links). The seminal work in this category is the concept of ElasticTree proposed by Heller \emph{et al.} \cite{Heller-Seetharaman}, which is a network-wide power manager that dynamically adjusts the set of active network devices to satisfy changing traffic loads in DCNs. Shang \emph{et al.} \cite{Shang-Li} discussed how energy can be saved by energy-aware routing with negligible performance degradation in high-density DCNs. Some follow-up works include REsPoNse \cite{Vasic-Bhurat}, CARPO \cite{Wang-Yao} and GreenDCN \cite{Wang-Zhang-jsac}. 


{\bf Multi-resource allocation.} Resource allocation in computing systems has been widely discussed under the limit of a single resource, such as CPU time and link bandwidth. While multi-resource allocation is considered in cloud computing systems, it is usually carried out with a slot-based single resource abstraction. Recently, researches have made some efforts (e.g., \cite{Joe-Wong_Sen-2012, Wang_Li-2014}) towards multi-resource fair sharing under the Dominant Resource Fairness (DRF) model proposed by Ghodsi \emph{et al.} \cite{Ghodsi_Zaharia-2012}. On the network side, DRF-based approaches (e.g., \cite{Ghodsi_Sekar-2012, Wang_Liang-2014}) have also been proposed to achieve multi-resource fair queueing in software packet processors.

Compared to these studies, our work possesses its uniqueness in the sense that we are the first to target the problem of achieving network-wide energy efficiency under multi-resource settings based on the NaaS model.

\section{Problem Statement}
\label{sec:problem}


In the server end, the problem of finding the minimum number of servers to accommodate a given set of tasks whose resource requirements are characterized by vectors is defined as the Vector Bin Packing (VBP) problem. The best known solution for the general form of the problem is a $(\ln K) + 1+\epsilon$ approximation for any $\epsilon > 0$, provided by Basal \emph{et al.}, \cite{Bansal-Caprara-2006} where $K$ is the number of dimensions of each item. While the single-resource network energy optimization problem has been well-studied, very little attention has been received by the energy-efficient routing problem in networks with multiple resources. With an emerging trend of software packet processing in networks, this problem has raised its significance. In the following, we provide a formal modeling of the problem and examine its complexity.

\subsection{Preliminary Notations}

We abstract a given software packet-processing network as graph $\mathcal{G} = \{\mathcal{V}, \mathcal{E}\}$, where $\mathcal{V}$ is the set of $N$ nodes, each of which represents a general-purpose server with software packet processing functionalities, and $\mathcal{E}$ is the set of undirected edges representing the network links. Each node $v \in \mathcal{V}$ has limited amounts of $K$ different types of hardware resources, namely CPU, memory, and network bandwidth, to name a few. The total amount of type-$k$ resource is constrained by a positive capacity $C_{v,k}$ $(k \in \{1,2,...,K\})$. Due to the fact that packet-processing networks are usually constructed using commodity general-purpose servers, it is reasonable to assume that all the nodes in $\mathcal{V}$ are identical. Thus, for all $v \in \mathcal{V}$, we assume $C_{v,k} = C_k$ for all $k \in  \{1,2,...,K\}$.

We define a \emph{flow} as a sequence of data packets that possess the same entities in the packet headers such as the same source and destination IP addresses. Suppose we are given a set of $M$ flow demands $\mathcal{D} = \{\mathsf{d}_1, \mathsf{d}_2,...,\mathsf{d}_M\}$. The packets from the same flow $\mathsf{d}_m$ will be routed following a single path in order to avoid packet reordering at the destination. For all the packets from a given flow, a processing procedure is defined on every node on the flow's routing path, which is used to carry out some per-flow computation to the payloads of packets, e.g., intercepting packets on-path to implement opportunistic caching strategies \cite{Costa-Migliavacca}. Due to the fact that the data carried by the packets from the same flow generally possess the same structure (e.g., same packet size), we assume that (nearly) the same amount of computation will be applied to the packets from the same flow. As a result, we have to keep (almost) the same reservation across each type of resource on every node on the path for each flow. Each flow $\mathsf{d}_m$ is represented by a three-tuple $(v_m^{\mathrm s}, v_m^{\mathrm t}, \vec{R}_m)$ where $v_m^{\mathrm s}$ and $v_m^{\mathrm t}$ are the source and destination respectively, while $\vec{R}_m$ is a $K$-dimensional vector $(r_{m,1},...,r_{m,K})$ describing the amounts of resources in all types required (and reserved) for a node to process the packets from flow $\mathsf{d}_m$. These resource demands can be obtained by applying the same technique used in \cite{Ghodsi_Sekar-2012}. For the sake of simplicity and without loss of generality, we assume that the $r_{m,k}$ for $m \in  \{1,2,...,M\}$ are \textit{normalized} by $C_k$ for any $k \in  \{1,2,...,K\}$, i.e. $\vec{R}_m \in [0,1]^K$.

To quantify the performance of approximations, we term $\gamma$ as the \emph{performance ratio} of an algorithm for a minimization problem if the objective values in the solutions provided by the algorithm are upper-bounded by $\gamma$ times the optimal.

\subsection{Problem Formulation}

Using the introduced notations, the energy-efficient multi-resource routing problem can be formally defined as follows. For a vector $\vec{x}$, we denote by $||\vec{x}||_\infty$ the standard $\ell_\infty$ norm.

\begin{definition}[{\bf ENERGY-EFFICIENT MULTI-RESOURCE ROUTING (EEMR)}]
Given a network $\mathcal{G} = (\mathcal{V}, \mathcal{E})$ and a set of $M$ flows $\mathsf{d}_1,...,\mathsf{d}_M$ whose demands are characterized by $\vec{R}_1,...,\vec{R}_M$ from $[0,1]^K$, find a path $\mathcal{P}_m$ from $v_m^{\mathrm{s}}$ to $v_m^{\mathrm{t}}$ for each flow $\mathsf{d}_m$ such that $||\vec{A}_v||_\infty \leq 1$ for $v \in \mathcal{V}$ where $\vec{A}_v = \sum_{m:v \in \mathcal{P}_m} \vec{R}_m$ is the aggregation of the resource requirement vectors of flows that are routed through node $v$. The objective is to minimize $|\mathcal{Q}|$ where $\mathcal{Q} = \{v ~|~v \in \mathcal{V} \wedge \vec{A}_i \neq (0,...,0)\}$ is the set of nodes that are used to carry flows.
\end{definition}

The EEMR problem can be formulated as a Mixed Integer Program (MIP) in the following way. We introduce two binary variables $x_{m,v}$ and $y_v$. The binary variable $x_{m,v}$ indicates whether flow $\mathsf{d}_m$ is routed through node $v$ and $y_v$ indicates whether node $v$ is active or not. Our objective is to minimize the number of active nodes.\footnote{As the static power consumption of a node is dominant, we only consider using power-down based strategy as the main energy saving mechanism.} Note that we have an implicit assumption that feasible solutions are always achievable, that is, the network with the designed capability is able to handle the given traffic demands.
\begin{equation}
\begin{aligned}
(\mathbb{P}_1)~~~~&\text{minimize}~~  \sum_{v \in \mathcal{V}} y_v  \\
\text{subject to} \\
& ||\sum_{m \in \{1,2,...,M\}} \vec{R}_m \cdot x_{m,v}||_{\infty} \leq 1 ~~~~~ v \in \mathcal{V}\\
& x_{m,v} \leq y_v ~~~~~~~~~~~~~~~~~~ v \in \mathcal{V}, 1 \leq m \leq M \\
& x_{m,v}, y_{v} \in \{0,1\} ~~~~~~~~~~ v \in \mathcal{V}, 1 \leq m \leq M \\
& x_{m,v}:~\text{flow~conservation}
\end{aligned}
\nonumber
\end{equation}
The constraints of program $\mathbb{P}_1$ are as follows: the first constraint states that the flows routed through the same node do not exceed the node resource dimensions; the second constraint tells whether a node is active or not; the third constraint ensures that each flow can only follow a single path. Flow conservation on $x_{m,v}$ forces that the nodes that flow demand $\mathsf{d}_m$ is routed through form a path between $v_m^{\mathrm s}$ and $v_m^{\mathrm t}$ in the network.

Note that when $K = 1$, $\mathbb{P}_1$ corresponds to the general \emph{capacitated network design} problem which has been widely studied. For the uniform link capacitated version of the problem, Andrews, Antonakopoulos and Zhang \cite{Andrews-Anto-2010} provided a polylogarithmic approximation when the capacity on each link is allowed to be exceeded by a polylogarithmic factor. Recently, \cite{Krish-Naga-2014} explored the multicommodity node-capacitated network design problem and provided a $\mathcal{O}(\log^5 n)$-approximation with $\mathcal{O}(\log^{12} n)$ congestion. However, none of the studies can provide high-quality approximations with capacity constraints that are inviolable. This is mainly because with strict capacity constraints, finding out whether there is a feasible solution for the problem is already NP-hard.

\subsection{Complexity Analysis}

In contrast with the traditional energy-efficient routing (i.e., capacitated network design) problem, EEMR extends the concept of ``load" from single-dimensional to multidimensional, which makes the problem even computationally harder. In general, we have the following complexity results.
\begin{theorem}
Solving the EEMR problem is NP-hard.
\end{theorem}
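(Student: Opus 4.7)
The plan is to establish NP-hardness by a polynomial-time reduction from the classical \emph{Bin Packing} problem, which is well-known to be strongly NP-hard. Since one-dimensional \emph{Bin Packing} is the $K=1$ specialization of the \emph{Vector Bin Packing} framework noted at the start of this section, it suffices to embed an arbitrary \emph{Bin Packing} instance into an EEMR instance in polynomial time so that the optimum of one determines the optimum of the other.

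First I would fix the construction. Given items of sizes $s_1,\ldots,s_M\in(0,1]$ and unit-capacity bins, build a graph $\mathcal{G}$ whose node set is $\{u_i,w_i:1\le i\le M\}\cup\{b_1,\ldots,b_M\}$ and whose edges are $\{(u_i,b_j),(b_j,w_i):1\le i\le M,\,1\le j\le M\}$, set $K=1$ and $C_1=1$, and create one flow $\mathsf{d}_i=(u_i,w_i,s_i)$ per item. The size of $\mathcal{G}$ is polynomial in $M$, every node has a single resource type of capacity one, and the $b_j$'s play the role of candidate bins.

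Next I would argue correctness. Because $\mathcal{G}$ is bipartite with the $b_j$'s on one side and the $u_i,w_i$'s on the other, any feasible path for $\mathsf{d}_i$ must take the form $u_i\to b_{j(i)}\to w_i$ for some $j(i)$, so a routing is tantamount to an item-to-bin assignment. The capacity constraint $\|\vec{A}_{b_j}\|_\infty\le 1$ then reads $\sum_{i:j(i)=j}s_i\le 1$, which is exactly bin-packing feasibility. Every $u_i$ and $w_i$ is the endpoint of exactly one flow of demand $s_i\le 1$, hence is always active but never overloaded; the active-node count is therefore $2M$ plus the number of bins actually used. Consequently the EEMR optimum equals $2M$ plus the \emph{Bin Packing} optimum, and any polynomial-time algorithm for EEMR would yield one for \emph{Bin Packing}.

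The main obstacle I anticipate is structural rather than technical: the reduction must prevent flows from bypassing the bin nodes or pooling at shared sources/destinations in ways that would make the capacity accounting misrepresent bin-packing feasibility. Giving each item its own $(u_i,w_i)$ pair together with the bipartite edge set forces every flow to traverse exactly one $b_j$ while keeping each $u_i$ and $w_i$ carrying a single normalized demand. Once this is in place, the remaining verification—capacity arithmetic, the equivalence of objectives, and polynomial-time constructibility—is routine.
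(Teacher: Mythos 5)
Your reduction is in the same spirit as the paper's: both turn bins into intermediate network nodes and items into flows that must pass through them, so that minimizing active nodes is minimizing bins. The details differ. The paper reduces from $K$-dimensional Vector Bin Packing and attaches every bin node to a single shared pair $(\mathsf{src},\mathsf{dst})$; you reduce from one-dimensional Bin Packing and give each item its own terminal pair $(u_i,w_i)$ joined to every bin node. Your per-item terminals actually repair a defect in the paper's gadget: with a shared source and sink, those two nodes carry all $M$ flows, so their aggregate load $\sum_m \vec{R}_m$ will in general violate $\|\vec{A}_v\|_\infty \le 1$ and the reduced instance is infeasible unless the terminals are exempted from the capacity constraint. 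You also account explicitly for the additive offset in the objective ($2M$ in your case), which the paper leaves implicit. Reducing from the $K=1$ case is, if anything, slightly stronger, since it shows hardness already for a single resource.

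The one step that does not hold as written is the claim that any feasible path for $\mathsf{d}_i$ must be $u_i \to b_{j(i)} \to w_i$. Your graph is bipartite but not a union of stars: $u_i \to b_j \to u_{i'} \to b_{j'} \to w_i$ is a simple path, so routings are not literally in bijection with item-to-bin assignments, and a flow detouring through $u_{i'}$ could in principle load that terminal with more than its own item. The conclusion survives with a small repair: from a packing into $B$ bins, the two-hop paths give a feasible routing with $2M+B$ active nodes; conversely, from any feasible routing, assign each item to one bin node on its path, and note that the assigned load on each $b_j$ is at most $\vec{A}_{b_j} \le 1$ while the number of bins used is at most the number of active bin nodes. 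Hence $\mathrm{OPT}_{\mathrm{EEMR}} = 2M + \mathrm{OPT}_{\mathrm{BP}}$ still holds, and with that fix the argument is complete.
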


\begin{proof}
The proof is conducted on a polynomial-time reduction from the VBP problem which is known to be NP-hard. Assume we are given an arbitrary instance of VBP and now we reduce it to the EEMR problem in the following way: each bin in the VBP instance is a node in the network for EEMR and each node is connected with two extra nodes $\mathsf{src}$ and $\mathsf{dst}$. Each item in the VBP instance represents a flow which originates from $\mathsf{src}$ and ends at $\mathsf{dst}$ and has resource requirements characterized by the vector for the item. Then, straightforwardly, if we obtain an optimal solution for EEMR, this solution will correspond to an optimal solution to VBP with the same structure. As a result, any polynomial-time algorithm that optimally solves EEMR can also be used to solve VBP optimally, which contradicts with the fact that VBP is NP-hard.
\end{proof}
\begin{theorem}
There is no asymptotic PTAS for the EEMR problem unless P=NP.
\end{theorem}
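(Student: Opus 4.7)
The plan is to extend the reduction already constructed in the proof of Theorem~1, using the fact (due to Woeginger, 1997, and later sharpened by Chekuri--Khanna) that the Vector Bin Packing problem admits no asymptotic PTAS, even in dimension $K=2$, unless P=NP. Since the reduction from VBP to EEMR given in Theorem~1 is very tight — each used bin in VBP corresponds to one active non-terminal node in EEMR, with only the two additional nodes $\mathsf{src}$ and $\mathsf{dst}$ contributing to the cost — the same reduction should be approximation-preserving in the asymptotic sense, so I would reuse it rather than build a new one.

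Concretely, first I would recall the definition of an asymptotic PTAS: a family of polynomial-time algorithms $\{A_\varepsilon\}_{\varepsilon>0}$ together with constants $c_\varepsilon$ such that $A_\varepsilon$ outputs a feasible solution of value at most $(1+\varepsilon)\cdot\mathrm{OPT}+c_\varepsilon$. Second, I would fix an arbitrary instance $I$ of VBP with $K\ge 2$ dimensions, and invoke the Theorem~1 reduction to produce an EEMR instance $\Phi(I)$ on a ``two-level star'' network where every candidate bin node is adjacent to a common source and sink. The key observation is the exact correspondence between solutions: any feasible routing for $\Phi(I)$ that activates $q$ nodes corresponds to a VBP packing using exactly $q-2$ bins (after removing the mandatory $\mathsf{src}$ and $\mathsf{dst}$), and vice versa. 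In particular $\mathrm{OPT}_{\mathrm{EEMR}}(\Phi(I)) = \mathrm{OPT}_{\mathrm{VBP}}(I)+2$.

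Third, I would derive the contradiction. Suppose, for contradiction, that an APTAS exists for EEMR, and apply $A_\varepsilon$ to $\Phi(I)$. The resulting number of active nodes is at most
\begin{equation*}
(1+\varepsilon)\bigl(\mathrm{OPT}_{\mathrm{VBP}}(I)+2\bigr)+c_\varepsilon,
\end{equation*}
which after subtracting the two terminal nodes yields a VBP solution using at most
\begin{equation*}
(1+\varepsilon)\mathrm{OPT}_{\mathrm{VBP}}(I) + \bigl(2\varepsilon + c_\varepsilon\bigr)
\end{equation*}
bins. Since $2\varepsilon+c_\varepsilon$ is an additive constant depending only on $\varepsilon$, this is precisely an APTAS for VBP, contradicting the known hardness. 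Because the reduction runs in polynomial time and the feasibility mapping is trivial (every flow can be routed through any bin node), the contradiction is immediate.

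The main obstacle I anticipate is a bookkeeping one rather than a conceptual one: I must verify that the asymptotic-approximation guarantee really does survive the additive shift by $2$ and the removal of the two terminal nodes — i.e., that the correspondence between feasible EEMR routings and feasible VBP packings is both directions \emph{exactly} value-preserving up to this constant, and that no ``degenerate'' EEMR solution (e.g., one that somehow routes flows without activating the expected bin nodes) could artificially lower the objective. Since in $\Phi(I)$ every internal node can host at most one flow-path hop and every flow must traverse exactly one bin node between $\mathsf{src}$ and $\mathsf{dst}$, this check is routine; once it is in place, the APTAS-for-VBP contradiction closes the argument, and the theorem follows.
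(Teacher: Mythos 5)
Your proposal follows exactly the paper's route: it reuses the Theorem~1 reduction from VBP and invokes Woeginger's result that $K$-dimensional VBP with $K\ge 2$ admits no asymptotic PTAS unless P=NP. You simply spell out the bookkeeping (the additive $+2$ for $\mathsf{src}$ and $\mathsf{dst}$ being absorbed into the APTAS's additive constant) that the paper leaves implicit, so the argument is the same, just more explicit.
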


This is directly applied from the fact that VBP with $K \geq 2$ is know to be APX-hard which implies that there is no asymptotic PTAS for it \cite{Woeginger-1997}. From the above reduction we know that actually VBP is a special case for EEMR, meaning that EEMR has at least the same complexity as VBP.

\section{Energy-efficient Multi-resource Routing}
\label{sec:alg}

The complexity analysis results show that the EEMR problem is NP-hard, for which no existing exact solutions can scale to the size of current data center networks. Therefore, we resort to an intuitive approach that can provide suboptimal solutions very quickly. We detail our design in this section. 

\subsection{Key Observations}

We propose a greedy routing scheme to solve the energy-efficient multi-resource routing problem. The general idea is to use as few as possible nodes to carry all the traffic flows while maintaing the capacity constraints in all resource dimensions. More specifically, our design is based on the following two observations: $\mathit{i}$) flows preferably follow paths that consists of more \emph{active} nodes (that already carry some traffic) as this will introduce less extra energy consumption to the network; $\mathit{ii}$) it is important to allocate routes for flows on the active nodes such that all dimensions of the resources in every active node can be fully utilized. 

The second observation is a new concern steaming from the multi-resource context. In the single-resource case, the only criterion for the efficiency of a node is its resource utilization, i.e., the carried traffic divided by the total capacity. As a result, steering flows to those nodes with low utilizations will lead to an energy-efficient routing solution. However, this approach is not applicable to the multi-resource case. With multiple dimensions of resources, it is not clear how to define the resource utilization of a node, thus we will not be able to make routing decisions based on node utilizations. For instance, given two load vectors $\langle 0.6, 0.4, 0.1 \rangle$ and $\langle 0.4, 0.4, 0.3 \rangle$ of a node, and a flow demand vector $\langle 0.1, 0.3, 0.4 \rangle$, the resulted loads of this node when routing the given flow under the two different load levels are $\langle 0.7, 0.7, 0.5 \rangle$ and $\langle 0.5, 0.7, 0.7 \rangle$, which are not directly comparable. In order to step over this obstacle, we will provide a measuring method based on the distributions of node residual capacities and flow demands.

\subsection{The Routing Scheme}

The pseudocode of the routing scheme is shown in Algorithm~\ref{alg:mrg}. The algorithm runs progressively. In each iteration, it first tries to use only the set of active nodes. By searching the flow demand list, it tries to find out a candidate flow to route on the subnetwork $\mathcal{G}_a$ composed by the active nodes and the corresponding network links connecting these nodes. Note that it is necessary to remove the nodes that are not capable of carrying the flow, that is, when the flow is carried by the nodes, at least one dimension of the resource capacities of the nodes will be violated, leading to node congestion. We denote by $\mathcal{G}_c^m$ the residual network after removing the incapable nodes and the links attached to these nodes. We then carry out function $\mathtt{IsConn}$, a depth-first search procedure, to verify if the source and the destination of the current flow are connected in $\mathcal{G}_c^m$. If a candidate flow $\mathsf{d}_c$ that can be routed on $\mathcal{G}_c^m$ is found, we stop the search procedure; otherwise we pick up a flow demand uniformly at random (function $\mathtt{RandSelect}$) from the flow demand list. At this time, the residual capacity of the subnetwork formed by current active nodes is not sufficient for carrying any new flow, thus more nodes are needed to be activated so that routing demands for the newly selected flow can be satisfied. Once a candidate flow $\mathsf{d}_c$ has been determined, we remove the incapable nodes (those that satisfy $\vec{S}_v \lessdot \vec{R}_c$ which means that these exists at least one dimension $k$ such that $\vec{S}_v(k) > \vec{R}_c(k)$) according to the resources demand of the candidate flow and we denote by $\mathcal{G}_c$ the resulted network. Then, we apply a weight assignment process where we assign weights to the active nodes in $\mathcal{V}_a$ by invoking procedure $\mathtt{InvCount}$ (see below), and the weights for other nodes to be $(K(K-1)/2 + 1)$. In order to facilitate path selection, we carry out a node-link transformation procedure to assign weights for links based on the weights for nodes. The design of node weight assignment and node-link transformation will be detailed later in this section. At last, the candidate flow will be routed by involving a shortest-path-based algorithm such as Dijkstra algorithm on the weighted network $\mathcal{G}_c$ and will be removed from the demands list. The above process is repeated until the route for every flow has been assigned.  

\begin{algorithm}[t]
\caption{Multi-Resource Green (MRG) routing}\label{alg:mrg}
\begin{algorithmic}[1]
\State{$\mathcal{V}_a = \emptyset$;~~/*\textit{set of active nodes}*/}
\State{\textbf{for each} ($v \in \mathcal{V}$) $\vec{S}_v = \{0\}^K$;~~/*\textit{residual resources}*/}
\State{$\mathcal{E}_a \triangleq \{(v_1, v_2) \in \mathcal{E}~|~\forall v_1, v_2 \in \mathcal{V}_a\}$; $\mathcal{G}_a \triangleq \{\mathcal{V}_a, \mathcal{E}_a\}$;}
\State{\textbf{while} ($\mathcal{D}$ is not empty)}
	\State{~~$\mathsf{d}_c$ == $\mathrm{none}$;}
	\State{~~\textbf{for each} ($\mathsf{d}_m \in  \mathcal{D}$)~/*\textit{Search for a candidate flow}*/}
		\State{~~~~$\mathcal{G}_{c}^m = \mathcal{G}_a  \setminus \{v~|~\vec{S}_v \lessdot \vec{R}_m\}$;~/*\textit{remove incapable nodes}*/}
		\State{~~~~\textbf{if} ($\mathtt{IsConn}(\mathcal{G}_c^m, v_m^\mathrm{s}, v_m^\mathrm{t})$ == $\mathrm{true}$)}
			\State{~~~~~~$\mathsf{d}_c = \mathsf{d}_m$; $\mathcal{G}_c = \mathcal{G}_c^m$;}
			\State{~~~~~~\textbf{break};}
	\State{~~\textbf{if} ($\mathsf{d}_c$ == $\mathrm{none}$)~/*\textit{candidate flow not found}*/}
		\State{~~~~$\mathsf{d}_c = \mathtt{RandSelect}(\mathcal{D})$;}
	\State{~~$\mathcal{G}_c = \mathcal{G} \setminus \{v~|~\vec{S}_v \lessdot \vec{R}_c\}$;}
	\State{~~\textbf{for each} ($v \in \mathcal{V}$)~/*\textit{node weight assignment}*/}
	\State{~~~~\textbf{if} ($v\in \mathcal{V}_a$)~$w_v = \mathtt{InvCount}(\vec{S}_v, \vec{R}_m)$;}
	\State{~~~~\textbf{else}~$w_v = K(K-1)/2 + 1$;}
	\State{~~\textbf{for each} ($(v_1, v_2) \in \mathcal{E}_c$)~$w_e = (w_{v_1} + w_{v_2})/2$;}
	\State{~~$\mathcal{P}_c = \mathtt{SPath}(\mathcal{G}_c, \mathsf{d}_c)$;~/*\textit{shortest path routing}*/}
	\State{~~$\mathcal{V}_a = \mathcal{V}_a \cup \mathcal{P}_c$; $\mathcal{D} = \mathcal{D} \setminus \mathsf{d}_c$;}
	\State{~~\textbf{for each} ($v \in \mathcal{P}_c$)~$\vec{S}_v = \vec{S}_v - \vec{R}_m$;}
\end{algorithmic}
\end{algorithm}

{\bf Inversion-based node weight assignment.}~We now describe the function $\mathtt{InvCount}$ for assigning weights to network nodes. The second observation we mentioned at the beginning of this section suggests that once we have obtained a candidate flow to route on the subnetwork comprised of the active capable nodes, it is important to decide which nodes are preferable to carry the candidate flow. We provide a measure based on the distributions of the load vectors of both the node residual capacities and the flow demand. The general notion is that if the resource dimensions of a node are all kept balanced, then more flows will likely fit into the node. As a consequence, the number of nodes that need to be active will be reduced. To clarify, we first introduce the concept of \emph{inversion}.

\begin{definition}
Given two vectors $\vec{X} = \langle x_1,...,x_n \rangle$ and $\vec{Y} =  \langle y_1,...,y_n \rangle$, an \textbf{inversion} is defined as the condition $x_i > x_j$ and $y_i < y_j$, $1 \leq i,j \leq n$.
\end{definition}

\begin{property}
Given two vectors in $n$ dimensions, the total number of inversions is upper bounded by $n(n-1)/2$.
\end{property}

As we are focusing on the distributions of the node residual capacities and the flow resources demands, it is straightforward that an inversion can lead to much heavier resource dimensions imbalance on a node as the scarce resource is demanded more and the abundant resource is demanded less. Therefore, in order to keep all the dimensions of resources balanced, the number of inversions has to be minimized. Based on this principle, the inversion-based node weight assignment procedure assigns weights for nodes that are already active according to the number of inversions shared by the node residual capacity vector and the flow demand vector. The weights of the inactive nodes are set to be one unit larger than the maximum number of inversions that can be shared by any residual capacity vector and flow demand vector. As a result, if possible, the nodes that are active and with less numbers of inversions will be preferably chosen to carry the candidate flow and the inactive nodes have the lowest priority to be used.

{\bf Path selection.}~The guideline for selecting the route for the candidate flow is to choose a path that connects the source and the destination of the candidate flow while minimizing the total weights of nodes that are on the path. This is actually equivalent to solving a node-weighted single-source shortest path routing problem. We notice that this problem can be transformed into a traditional link-weighted single-source shortest path routing problem by setting the weight of each link to be the half of the sum of the weights of the endpoints of this link. Denote by $\mathbb{R}_1$ and $\mathbb{R}_2$ the node-weighted and the transformed link-weighted shortest path routing problems respectively. We have the following property.
\begin{property}
Solving $\mathbb{R}_1$ is equivalent to soving $\mathbb{R}_2$.
\end{property}

This is because as long as a flow is routed through an intermediate node (nodes except the source and the destination) on a path, the weight on this node will be shared by two links (i.e., the ingress and egress links). Thus, if we let all the links that are attached to an intermediate node to share half of the node weight, it is always true that the total weights on the links on this path will be equal to the total weights on the internal nodes on this path. As a result, solving the corresponding link-weighted single-source shortest path routing problem will also give solutions to the path selection for the candidate flow. It is well-known that the link-weighted single-source shortest path routing problem can be solved efficiently by using the Dijkstra algorithm.

\subsection{Time Complexity}

We now analyze the time complexity of the proposed MRG algorithm. The algorithm runs iteratively and in each iteration exactly one flow will be chosen as the candidate and will be routed. As a result, the maximum number of iterations will be upper bounded by the number of flow demands $M$. In each iteration, the algorithm first searches in the flow demand list to find out a candidate flow and the most time consuming part in this candidate flow searching procedure is depth-first search which can be accomplished in $\mathcal{O}(|E|)$ time where $|E| \leq N^2$ is the total number of edges in the network ($N$ is the total number of nodes). The total searching time in one iteration then will be in $\mathcal{O}(M\cdot |E|)$. Once the candidate flow is found, the time complexity will be dominated by the shortest path routing algorithm which can be done in $\mathcal{O}(|E| + N \cdot \log N)$ time as the weight assignment procedure can be finished in time $\mathcal{O}(N \cdot K \cdot \log K)$. Combining all these, we have that the MRG algorithm can be finished in $\mathcal{O}(|E|M^2 )$ time.




\section{Topology-aware Heuristic}
\label{sec:heuristic}

\begin{algorithm}[t]
\caption{Hierarchical Green Routing (HGR)}\label{alg:hgr}
\begin{algorithmic}[1]
\State \textbf{function} $\mathtt{VBP}$($\mathcal{D}$)~/*\textit{vector bin packing algorithm}*/
	\State ~~$\mathrm{idx} = 1$; $\mathsf{d}_c = \mathrm{none}$; $S_{\mathrm{idx}} = \{0\}^K$;
	\State ~~$\alpha_k = \sum_{\mathsf{d}_m \in \mathcal{D}} R_m(k) / \sum_{\mathsf{d}_m \in \mathcal{D}} \sum_{k=1}^K R_m(k)$;
	\State ~~\textbf{while} ($\mathcal{D}$ is not empty)
	\State ~~~~$\mathcal{D}_c = \mathcal{D} \setminus \{\mathsf{d}_m \in \mathcal{D}~|~S_{\mathrm{idx}} \lessdot R_m \}$;
	\State ~~~~$\mathsf{d}_c = \argmin_{\mathsf{d}_m \in \mathcal{D}_c} \sum_{k=1}^K \alpha_k (S_{\mathrm{idx}}(k) - R_m(k))^2$;
	\State ~~~~\textbf{if} ($\mathsf{d}_c$ == $\mathrm{none}$)~$\mathrm{idx} $++;~/*\textit{open a new bin}*/
	\State ~~~~\textbf{else}~$\mathcal{D} = \mathcal{D} \setminus \{\mathsf{d}_c\}$;~/*\textit{pack the current item}*/
	\State ~~\textbf{return} $\mathrm{idx}$
\State \textbf{for} ($0 \leq i \leq z-1$)~/*\textit{\# of aggr. nodes in each pod}*/
\State ~~$N_i^{\mathrm{agg}} = \mathtt{VBP}(\{\mathrm{d}_m~|~v_m^{\mathrm{s}}~\mathrm{or}~v_m^{\mathrm{t}}~\mathrm{in~pod}~i\})$
\State \textbf{for} ($0 \leq j \leq z/2 - 1$)~/*\textit{\# of core nodes}*/
\State ~~$N_j^{\mathrm{core}} = \mathtt{VBP}(\{\mathrm{d}_m~|~(v_m^{\mathrm{s}}~\mathrm{or}~v_m^{\mathrm{t}}~\mathrm{mod}~(z^2/4))/ 2 = j\})$
\end{algorithmic}
\end{algorithm}

The proposed MRG algorithm can leverage the coordination of the flow demands in multiple dimensions and minimize the number of active network nodes efficiently. However, MRG is generally conducted without taking into account the topology features of the network. We notice that topologies of the networks commonly used in data center networks such as fat-tree or VL2 have very high level of symmetry and they are usually well structured in layers. Therefore, we argue that the routing algorithm can be further improved by taking advantage of the topology characteristics. In this section, we provide a new topology-aware heuristic for the most common tree-like data center network topologies. 


The key observation we have from tree-like topologies is that the number of active nodes can be determined layer by layer. We take a typical fat-tree topology (as shown in Fig.~\ref{fig:fat-tree}) as an example. The number of edge nodes cannot be optimized since edge nodes are also responsible for inter-host communication in the same rack. In each pod, the number of aggregation nodes can be determined according to the flow demands that flow out of and into the pod. This is actually to solve a vector bin packing problem as we have introduced previously. The core layer is a bit different from the aggregation layer; for a $z$-$\mathrm{ary}$ fat-tree, all the cores nodes that share congruence with respect to $(z/2)$ will be responsible for carrying the flow demands from the aggregation nodes in the same positions in every pod. Thus for these core nodes, solving a vector bin packing can give the right number of nodes that need to stay active. Inspired by this observation, we propose HGR, a hierarchical energy-efficient routing algorithm based on solving a set of vector bin packing problems. The pseudocode of HGR is shown in Algorithm~\ref{alg:hgr}.

{\bf Vector bin packing.}~The function $\mathtt{VBP}$ we adopted for solving the vector bin packing problem is a norm-based greedy algorithm \cite{vbp}. The algorithm is bin-centric which means that it focuses on one bin $\mathrm{idx}$ and always places the most suitable remaining item that fits in the bin. To find out the most suitable item, the algorithm looks at the difference between the demand vector $R_m$ and the residual capacity vector $S_{\mathrm{idx}}$ under a certain norm. We choose the $\ell_2$-norm and from all unassigned items, we choose the item that minimizes $\sum_{k=1}^K \alpha_k (S_{\mathrm{idx}}(k) - R_m(k))^2$ where $\alpha_k$ represents the importance of dimension $k$ among all dimensions and is given by
$$\alpha_k = \frac{\sum_{\mathsf{d}_m \in \mathcal{D}} R_m(k)}{\sum_{\mathsf{d}_m \in \mathcal{D}} \sum_{k=1}^K R_m(k)}.$$
If no item can be found to fit into the current bin $\mathrm{idx}$, we open a new bin and repeat the above procedure.

\begin{figure}[!t]
\centering
\includegraphics[scale=0.32]{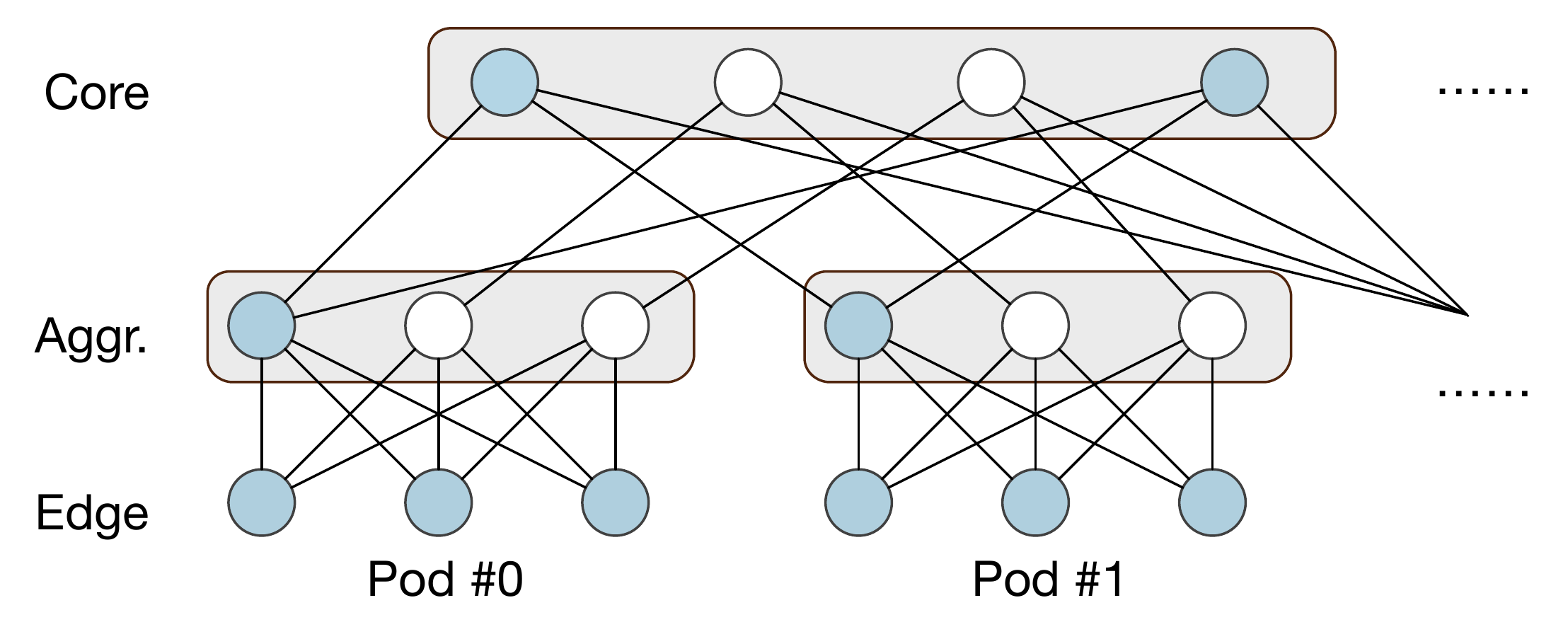}
\caption{\label{fig:fat-tree} An example to show the topology-aware heuristic in a fat-tree topology, where the set of active nodes are determined layer by layer.}
\vspace{-0.5cm}
\end{figure}

{\bf Time complexity.} The HGR algorithm replies on solving several instances of the vector bin packing problem. In the worst case, the sizes of the vector bin packing instances can be as large as $\mathcal{O}(M)$ and thus it will take $\mathcal{O}(M^2)$ time to be solved by $\mathtt{VBP}$ algorithm. As a result, the total time complexity of HGR can be given by $\mathcal{O}(M^2)$. Compared to the MRG algorithm, HGR can provide a speedup of $\Omega(|E|)$. We will validate this speedup by simulations.

\section{Evaluation}
\label{sec:simu}

We carried out extensive simulations to evaluate the performance of the proposed algorithms. In this section, we provide a detailed summary of our simulation findings.

\subsection{Simulation Settings}

We deploy our algorithms on a laptop with a Core i5 2.6GHz CPU with two physical cores and 8GB DRAM. All of the algorithms are implemented in Python.

We choose fat-trees of different sizes as the data center network topologies. This is because fat-tree is a typical topology used in DCNs, and can provide equal-length parallel paths between any pair of end hosts, which is very beneficial for software packet processing paradigm to embed processing functions into the routing paths regardless of the topology details. The flow demands we used in our simulations are generated randomly: the endpoints of each flow are chosen uniformly at random from the set of end hosts. The requirement of each resource dimension of each flow is generated following a normal distribution (in the positive side) where the mean and the variation are all set to be $0.02$ to provide large resource demand diversity. The node capacity of each resource dimension is assumed to be normalized to $1$. 

We carry out two groups of simulations for validating MRG and HGR respectively: $\mathit{i})$ For evaluating the performance of algorithm MRG, we compare it with three other algorithms of interest: Single-Resource Shortest Path (SRSP), Single-Resource Green (SRG), Multi-Resource Shortest Path (MRSP). The efficiency of energy saving of the four algorithms are examined on two fat-tree topologies in different scales under different numbers of flow demands. We also explore the impact of the number of resource dimensions under certain scenarios. $\mathit{ii})$ The performance of algorithm HGR is compared with that of algorithm MRG. We first study the impact of the number of resource dimensions. Then, under certain scenarios, we examine the efficiency of energy saving and the running time of both MRG and HGR under different numbers of flow demands. All the results are averaged among $20$ independent tests and all the figures show with the average and the standard deviation.

\subsection{Performance of Algorithm MRG}

\begin{figure}
	\centering
	\subfigure[Energy savings under different numbers of flows]{
 	\label{fig:8_5:es} 
	\includegraphics[scale=0.23]{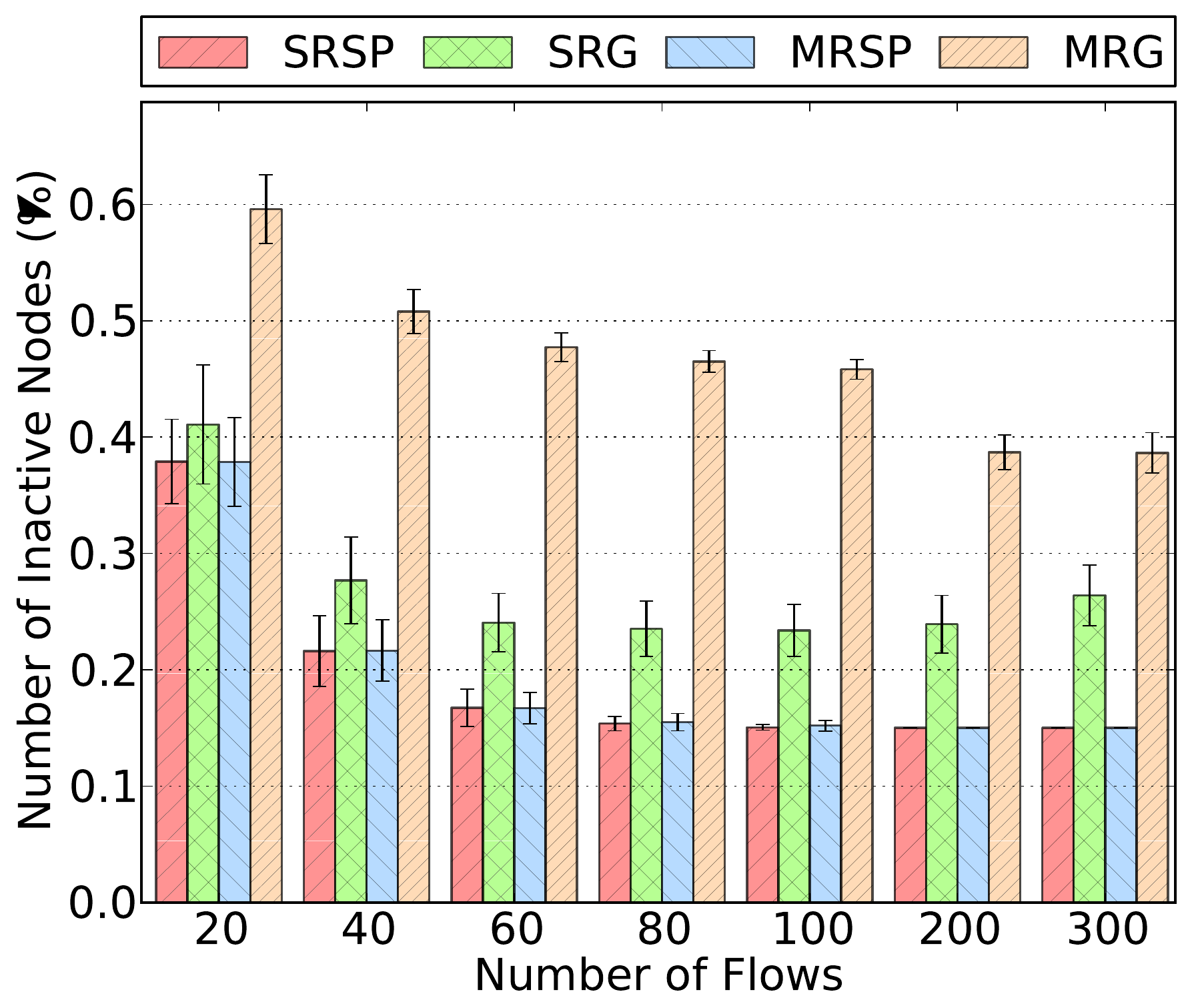}}
	\hspace{-0.1in}
	\subfigure[Number of incomplete flows]{
	\label{fig:8_5:inc} 
	\includegraphics[scale=0.23]{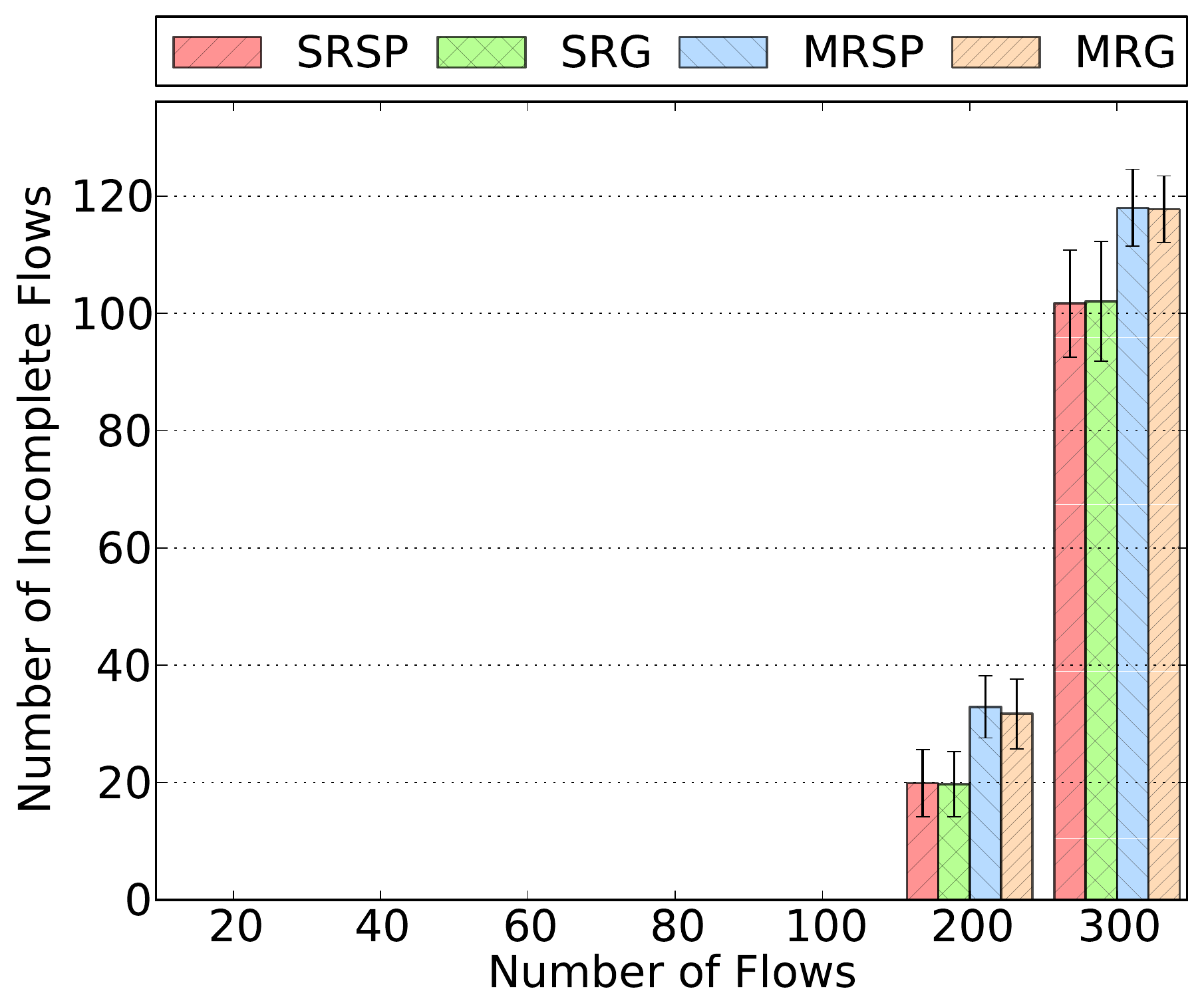}}
	\hspace{-0.1in}
	\subfigure[Number of congested nodes]{
 	\label{fig:8_5:vio} 
	\includegraphics[scale=0.23]{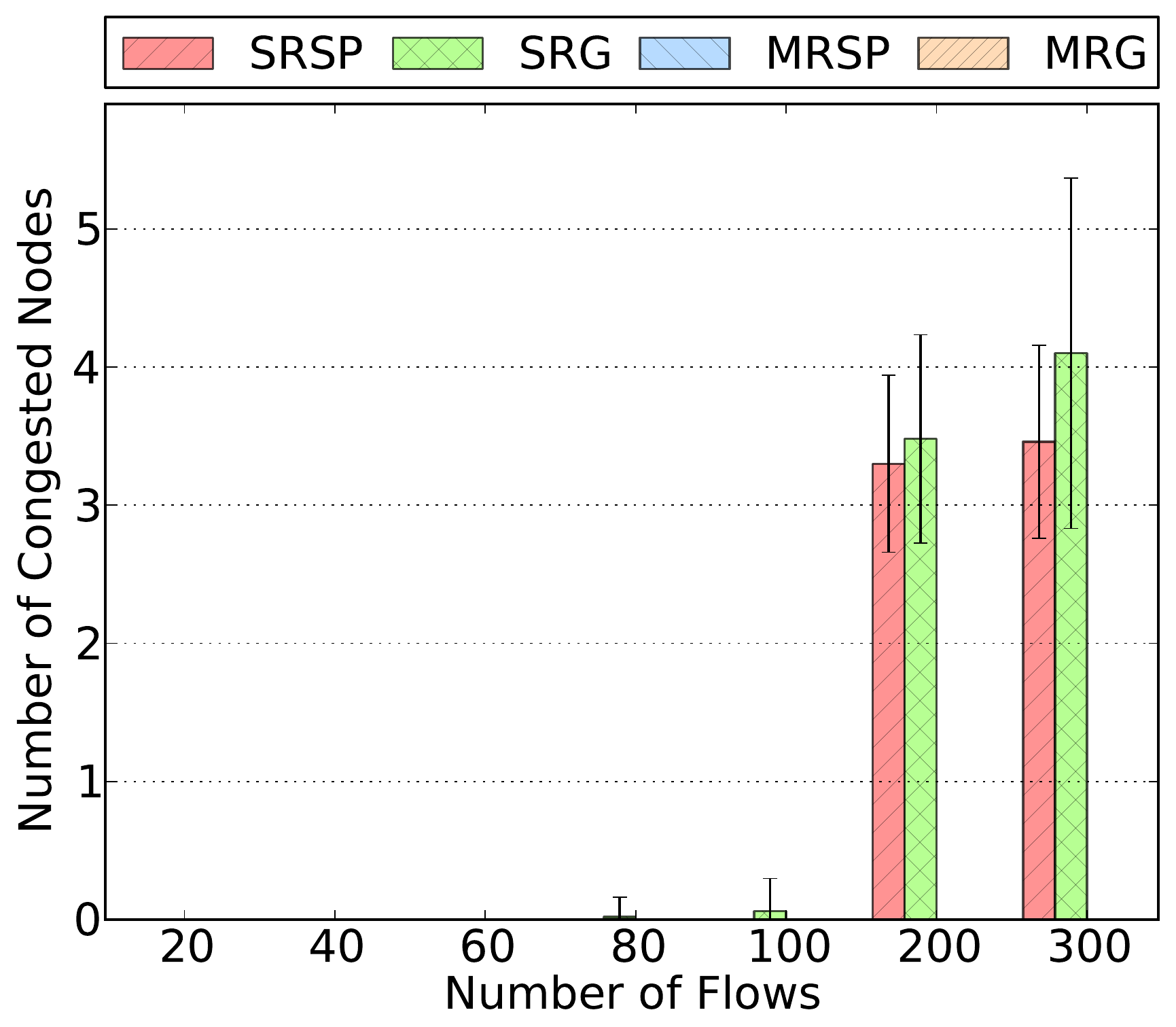}}
	\hspace{-0.1in}
	\subfigure[Energy savings under different numbers of resource dimensions]{
 	\label{fig:8:dim} 
	\includegraphics[scale=0.23]{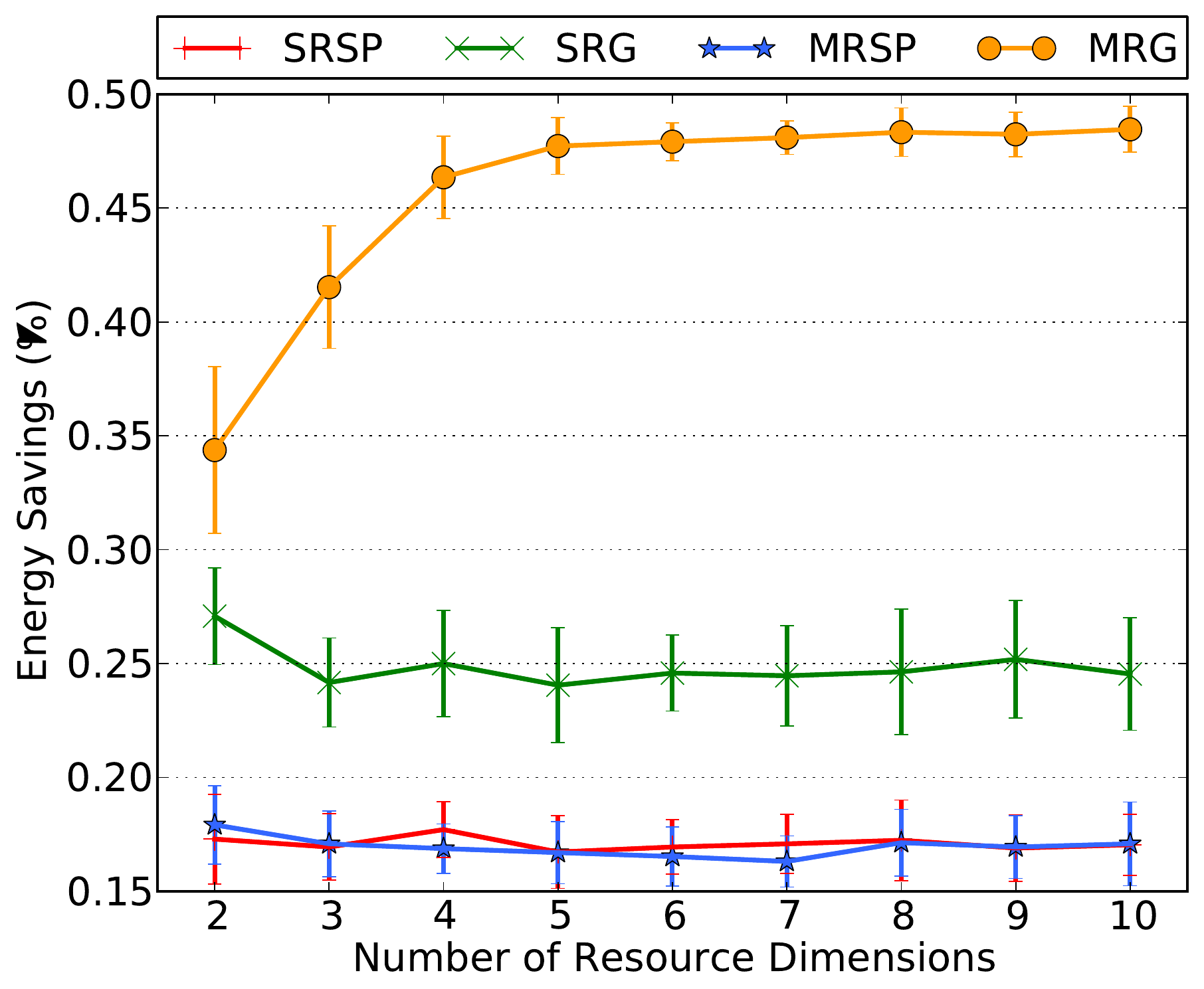}}
	\caption{\label{fig:mrg-8}Performance comparison for MRG under the scenarios where the network topology is given by an 8-$\mathrm{ary}$ fat-tree with 208 nodes (128 end-hosts and 80 packet processors).}
\vspace{-0.4cm}

\end{figure} 

{\bf Energy savings.} The simulation results for evaluating the energy saving performance of MRG are depicted in Fig.~\ref{fig:mrg-8}(a, b, c). The energy saving ratio is represented by the number of inactive nodes divided by the total number of nodes. It can be seen from Fig.~\ref{fig:8_5:es} that MRG outperforms the other three algorithms with respect to energy savings under all scenarios. SRSP and MRSP converge to very low energy saving ratios very quickly while SRG and MRG can exploit more energy saving potentials by carefully steering traffic flows. We also compare the performance of all the algorithms under extremely heavy load scenarios. When the number of flow demands exceeds the capability of the network (and congestion happens at some critical nodes), MRSP and MRG will block more flows than SRSP and SRG as can be seen in Fig.~\ref{fig:8_5:inc}. This is reasonable because MRSP and MRG take into account more resource dimensions and it is likely that node capacities are violated more easily than with single-resource solutions. However, when considering only one resource dimension, some nodes will be congested due to the neglect of other resource dimensions, although more flow demands are likely to be assigned. The numbers of nodes that are congested under different numbers of flows are shown in Fig.~\ref{fig:8_5:vio}.

{\bf Impact of the number of resource dimensions.}~Fig.~\ref{fig:8:dim} depicts the simulation results for examining scalability of MRG with respect to the number of resource dimensions. It can be obviously noticed that the energy saving performance of MRG has a very significant improvement with the increase of the number of resource dimensions and converges to a high level. This is because with more resource dimensions, the proposed inversion-based node weight assignment can distinguish nodes from one another more accurately and thus the path chosen for each flow will be more effective in terms of energy saving.

\subsection{Performance of Algorithm HGR}

We first compare the scalability of MRG and HGR with respect to the number of resource dimensions. The simulation results are shown in Fig.~\ref{fig2:8:dim}. It can be observed that HGR outperforms MRG when the number of resource dimensions is very small. However, with the increase of the number of resource dimensions, the energy saving performance of HGR drops dramatically with a constant rate, while MRG performs better and better and converges finally as we have discussed before. This is mainly because HGR is largely based on the vector bin packing heuristic which performs well when the number of dimensions is small due to the greedy manner of item assignment, but it has very poor scalability with respect to the number of dimensions.

We then choose a fair number of resource dimensions ($K = 3$) and compare both the energy saving ratio and the running time of MRG and HGR. The energy saving results are depicted in Fig.~\ref{fig2:8:es}. We observe that when the number of flow demands is not very large, MRG and HGR are comparable in terms of energy savings, but HGR suffers from some performance degradation when the number of flows is very large. However, HGR compensates this slight loss of energy efficiency by a very significant reduction on the running time. As can be seen from Table~\ref{tb:time}, for a fat-tree with $80$ packet processing nodes (i.e., $|E| = 192$), the running time of HGR is around $0.5$ percent of that of MRG, which confirms the lower bound on the speedup $\Omega(|E|)$.

\begin{figure}
	\centering
	\subfigure[Energy savings under different numbers of resource dimension]{
 	\label{fig2:8:dim} 
	\includegraphics[scale=0.23]{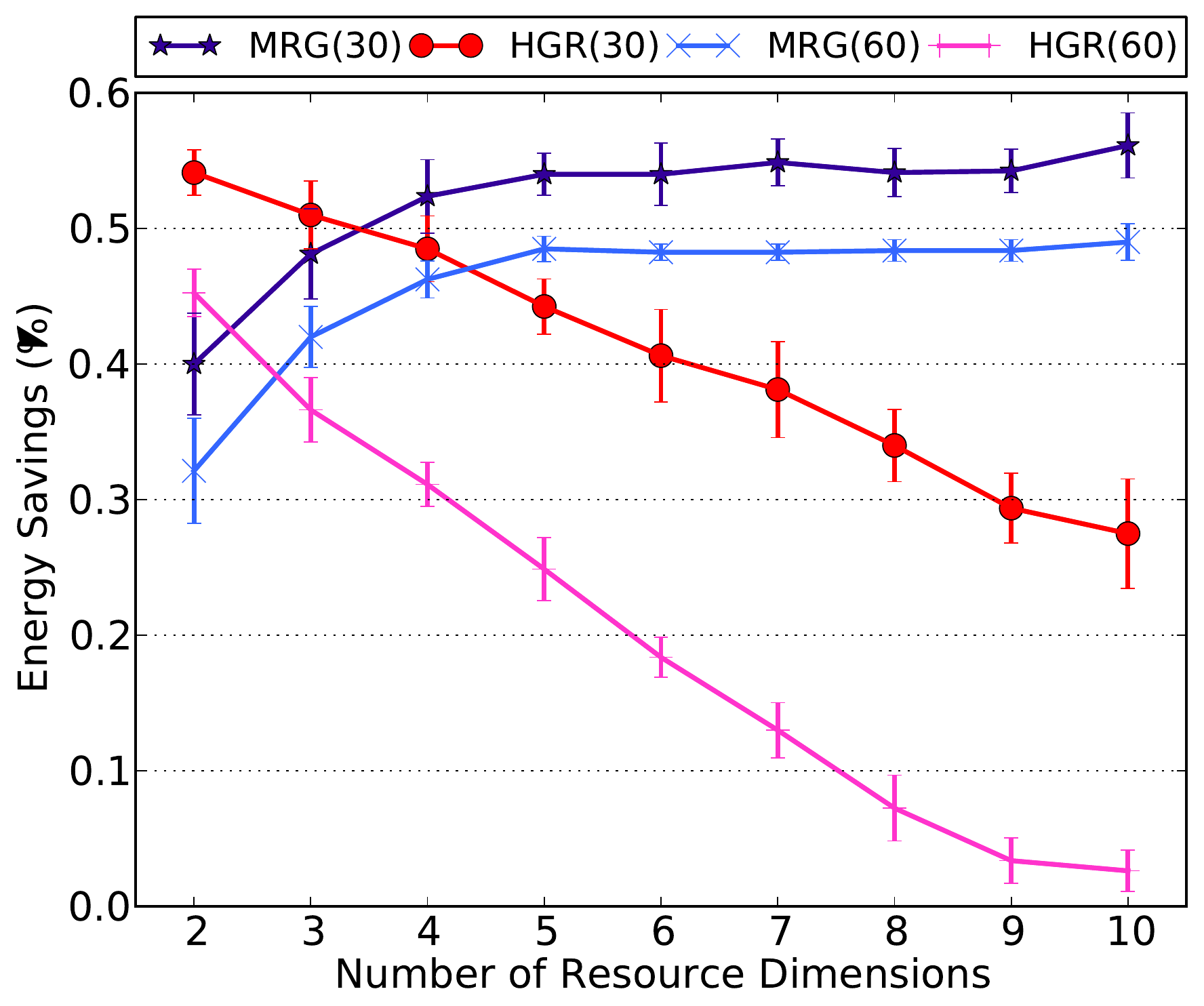}}
	\hspace{-0.05in}
	\subfigure[Energy savings under different numbers of flows]{
	\label{fig2:8:es} 
	\includegraphics[scale=0.23]{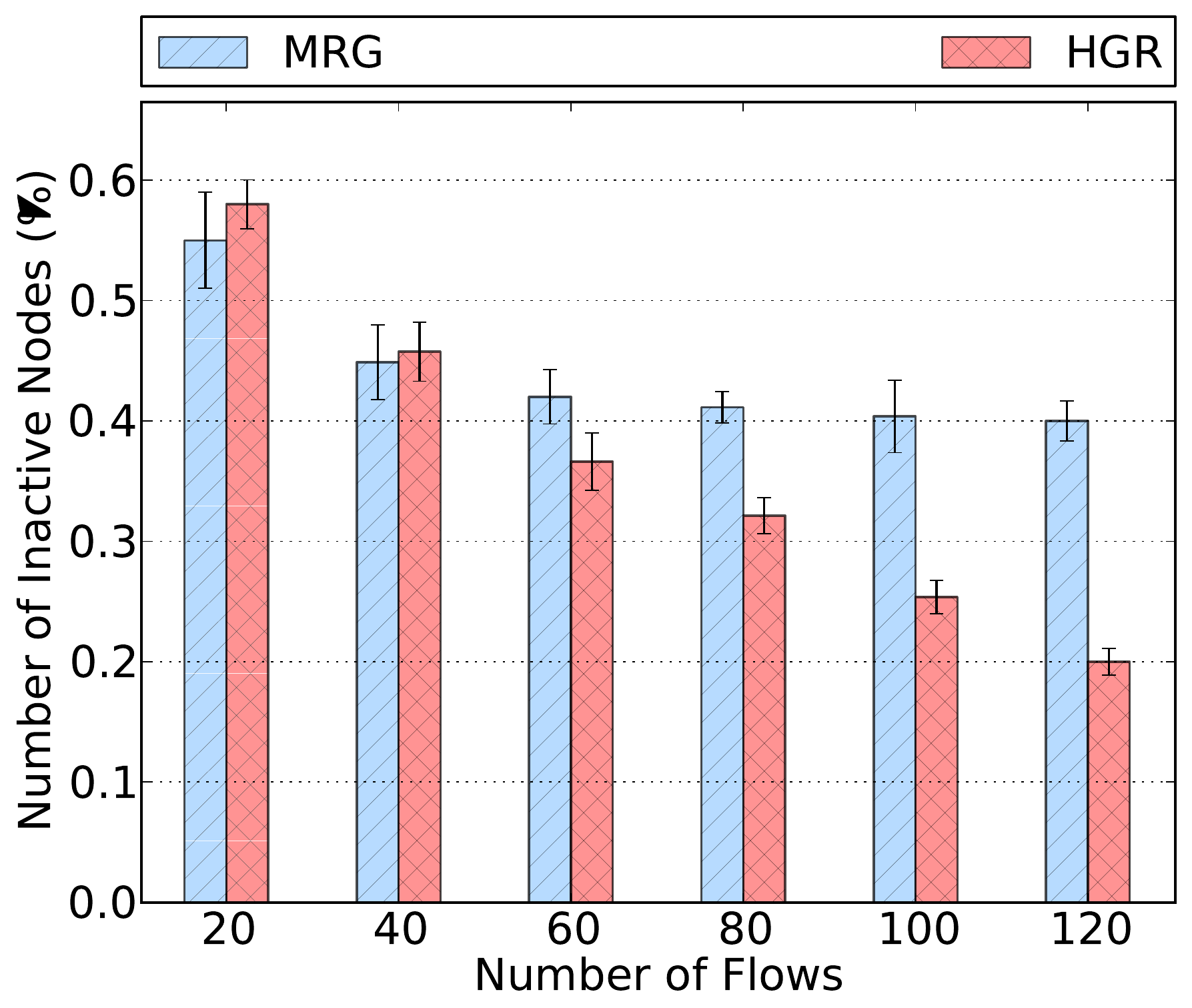}}
	\caption{\label{fig:hgr}Performance comparison for HGR under the scenarios where the network topology is given by an 8-$\mathrm{ary}$ fat-tree with 208 nodes (128 end-hosts and 80 packet processors).}
\vspace{-0.5cm}
\end{figure} 

\begin{table}
\caption{\label{tb:time} Running time Statistics of the Algorithms (Unit: secs)}
\centering
\begin{tabular}{ c  c  c  c  c  c  c  }
  \toprule
  $\mathrm{\#~of~flows}$ & 20 & 40 & 60 & 80 & 100 & 120 \\
  \midrule
  $\mathrm{alg.~MRG}$  & 5.37  & 16.63  & 37.00 & 58.26 & 92.63 & 101.89  \\
  \midrule
  $\mathrm{alg.~HGR}$ & 0.026 & 0.078 & 0.192 & 0.400 & 0.647 & 0.681 \\
  \bottomrule
\end{tabular}
\vspace{-0.5cm}
\end{table}

\section{Discussion}
\label{sec:disc}

We discuss in this section some practical issues that are related to the application of the proposed technique.

\subsection{Model Extension}
{\bf Dynamic flow joining and leaving.}~The problem we have discussed is for scenarios where a static set of flow demands is given a priori and the proposed MRG algorithm is dedicated to solving this problem. However, the reality differs from the static case by having flows joining and leaving the network dynamically. We observe that although we did not take into account the dynamic property of the set of flow demands, the MRG algorithm can be extended to the online case due to its progressive fashion. When a new flow arrives in the network, we first check whether the subnetwork formed by only the active nodes is capable of carrying this flow. If it is true, we carry out the node-weight assignment and path selection procedures to find out a routing path and route the flow using this path. Otherwise, we include also the inactive nodes with weights assigned and find out a path in the resulting network. When a flow completes its transmission, we focus on two types of flows: for the existing flows that have very short life-times, we leave them as they are in the network as they will be completed in a short time; for the long-lived flows in the network, we buffer the newly arrived flows until the existing short-lived flows are gone and we carry out the MRG algorithm to reroute those flows that have long life-times in order to achieve energy efficiency. After this rerouting, the routes for the buffered new flows will be assigned as well. The node-weight assignment and path selection procedures have low complexity. Thus they can be applied to online scenarios conveniently. Nevertheless, the centralized environment also enables parallel acceleration to ensure realtime optimization.

{\bf Heterogeneity.}~For the sake of tractability, we assumed in our model that the resources required for processing all the packets from a given flow on every node of the flow's routing path are very related to the size of the packets, and thus homogeneity can be assumed. However, this may not be true when a flow requires different processing functions on different nodes. Our model can be extended to the heterogeneous case by treating each node on the routing path independently, e.g., in the MRG algorithm, the weights on nodes can be assigned in a hop by hop manner; in the HGR algorithm, each vector bin packing instance is solved by having different resource requirements from the flow demands. We leave more elaborate solutions for future work.



\subsection{Practical Application Scenarios}

{\bf Named data networking.}~To generalize the role of thin waist of the IP architecture, Named Data Networking (NDN) was proposed, where packets can name objects other than communication endpoints. NDN routes and forwards packets based on names, which requires high-performance processing (e.g., prefix matching) capability at network nodes. Moreover, it is also useful for a network node to cache the received data packets in its content store and use them to satisfy future requests. These properties make NDN a good application scenario for the NaaS model. As a result, the proposed GreenNaaS solution will have the potential to be used for achieving energy efficiency in NDN.

{\bf Server-centric data center network architecture.}~In traditional switch-centric networks, data packets are transmitted through only proprietary network devices such as switches or routes. With the possibility of having multi-port network cards on servers, several server-centric network architectures such as BCube \cite{Guo_Lu-2009}, SWCube and SWKautz \cite{Li_Wu-2014} have been recently proposed for data center networks. In these server-centric network architectures, servers are also involved in packet forwarding. By applying more application-specific processing functions on packets, these architectures are very easy to be extended to adopt the NaaS model and thus GreenNaas can be used to save energy in those server-centric data center networks where energy issue is more prominent than in traditional switch-centric data center networks.

{\bf Middlebox orchestration.}
Middleboxes are special network devices that are responsible for packet processing to provide functionality such as NATs, firewalls or WAN optimizers. Traditionally, these devices are proprietary and have been implemented on closed hardware platforms which are usually hard to be extended. To change this situation, recently many proposals have been provided to make middlebox functionalities software-centric (e.g., \cite{Sekar_Egi-2012, Gember_Viswanathan-2014}). NaaS incorporates these by leveraging application-specific in-network packet processing. Although there are still many critical problems, such as service chain design, needing more research efforts, GreenNaaS provides a clear insight on how to achieve energy efficiency in NaaS systems which is definitely an important issue in the near future.
\section{Conclusion}
\label{sec:conc}
We study the energy-efficiency multi-resource routing  problem which arises from the recently proposed cloud networking model NaaS. This optimization problem differs from the traditional energy-efficient routing problem by having node capacities and flow demands represented by vectors in multiple dimensions. We provide a simple iterative routing scheme which selects flows iteratively to exhaust the residual capacities in active nodes and assign routes to flows based on the distributions of node residual capacities and flow demands. To leverage the structural property of data center network topologies, we also provide a topology-aware heuristic designated to fat-trees, which can provide comparably good energy efficiency while significantly reducing the computation time. 






%
%
%

\bibliographystyle{IEEEtran}
\bibliography{IEEEabrv,ref}

\begin{thebibliography}{10}
\providecommand{\url}[1]{#1}
\csname url@samestyle\endcsname
\providecommand{\newblock}{\relax}
\providecommand{\bibinfo}[2]{#2}
\providecommand{\BIBentrySTDinterwordspacing}{\spaceskip=0pt\relax}
\providecommand{\BIBentryALTinterwordstretchfactor}{4}
\providecommand{\BIBentryALTinterwordspacing}{\spaceskip=\fontdimen2\font plus
\BIBentryALTinterwordstretchfactor\fontdimen3\font minus
  \fontdimen4\font\relax}
\providecommand{\BIBforeignlanguage}[2]{{%
\expandafter\ifx\csname l@#1\endcsname\relax
\typeout{** WARNING: IEEEtran.bst: No hyphenation pattern has been}%
\typeout{** loaded for the language `#1'. Using the pattern for}%
\typeout{** the default language instead.}%
\else
\language=\csname l@#1\endcsname
\fi
#2}}
\providecommand{\BIBdecl}{\relax}
\BIBdecl

\bibitem{Al-Fares_Radhakrishnan-2010}
M.~Al-Fares, S.~Radhakrishnan, B.~Raghavan, N.~Huang, and A.~Vahdat, ``Hedera:
  Dynamic flow scheduling for data center networks,'' in \emph{NSDI}, 2010, pp.
  281--296.

\bibitem{Benson_Anand-2011}
T.~Benson, A.~Anand, A.~Akella, and M.~Zhang, ``Microte: fine grained traffic
  engineering for data centers,'' in \emph{CoNEXT}, 2011, p.~8.

\bibitem{of}
OpenFlow.
  https://www.opennetworking.org/images/stories/downloads/sdn-resources/white-papers/wp-sdn-newnorm.pdf.

\bibitem{Benson_Akella-2011}
T.~Benson, A.~Akella, A.~Shaikh, and S.~Sahu, ``Cloudnaas: a cloud networking
  platform for enterprise applications,'' in \emph{SoCC}, 2011, p.~8.

\bibitem{Costa-Migliavacca}
P.~Costa, M.~Migliavacca, P.~Pietzuch, and A.~L. Wolf, ``Naas:
  Network-as-a-service in the cloud,'' in \emph{Hot-ICE}, 2012.

\bibitem{Costa_Donnelly-2012}
P.~Costa, A.~Donnelly, A.~I.~T. Rowstron, and G.~O'Shea, ``Camdoop: Exploiting
  in-network aggregation for big data applications,'' in \emph{NSDI}, 2012, pp.
  29--42.

\bibitem{Heller-Seetharaman}
B.~Heller, S.~Seetharaman, P.~Mahadevan, Y.~Yiakoumis, P.~Sharma, S.~Banerjee,
  and N.~McKeown, ``Elastictree: Saving energy in data center networks,'' in
  \emph{NSDI}, 2010, pp. 249--264.

\bibitem{Vasic-Bhurat}
N.~Vasic, P.~Bhurat, D.~M. Novakovic, M.~Canini, S.~Shekhar, and D.~Kostic,
  ``Identifying and using energy-critical paths,'' in \emph{CoNEXT}, 2011,
  p.~18.

\bibitem{Wang-Yao}
X.~Wang, Y.~Yao, X.~Wang, K.~Lu, and Q.~Cao, ``Carpo: Correlation-aware power
  optimization in data center networks,'' in \emph{INFOCOM}, 2012, pp.
  1125--1133.

\bibitem{Wang-Zhang-jsac}
L.~Wang, F.~Zhang, J.~A. Aroca, A.~V. Vasilakos, K.~Zheng, C.~Hou, D.~Li, and
  Z.~Liu, ``Greendcn: A general framework for achieving energy efficiency in
  data center networks,'' \emph{IEEE Journal on Selected Areas in
  Communications}, vol.~32, no.~1, pp. 4--15, 2014.

\bibitem{Sherwood-Gibb}
R.~Sherwood, G.~Gibb, K.-K. Yap, G.~Appenzeller, M.~Casado, N.~McKeown, and
  G.~M. Parulkar, ``Can the production network be the testbed?'' in
  \emph{OSDI}, 2010, pp. 365--378.

\bibitem{Wang-Butnariu}
R.~Wang, D.~Butnariu, and J.~Rexford, ``Openflow-based server load balancing
  gone wild,'' in \emph{Hot-ICE}, 2011, pp. 12–--17.

\bibitem{Fall_Iannaccone-2011}
K.~R. Fall, G.~Iannaccone, M.~Manesh, S.~Ratnasamy, K.~J. Argyraki,
  M.~Dobrescu, and N.~Egi, ``Routebricks: enabling general purpose network
  infrastructure,'' \emph{Operating Systems Review}, vol.~45, no.~1, pp.
  112--125, 2011.

\bibitem{Jang_Han-2011}
K.~Jang, S.~Han, S.~Han, S.~B. Moon, and K.~Park, ``Sslshader: Cheap ssl
  acceleration with commodity processors,'' in \emph{NSDI}, 2011.

\bibitem{Dobrescu_Argyraki-2012}
M.~Dobrescu, K.~J. Argyraki, and S.~Ratnasamy, ``Toward predictable performance
  in software packet-processing platforms,'' in \emph{NSDI}, 2012, pp.
  141--154.

\bibitem{Niccolini-Iannaccone-2012}
L.~Niccolini, G.~Iannaccone, S.~Ratnasamy, J.~Chandrashekar, and L.~Rizzo,
  ``Building a power-proportional software router,'' in \emph{USENIX Annual
  Technical Conference}, 2012, pp. 89--100.

\bibitem{Abts-Marty}
D.~Abts, M.~R. Marty, P.~M. Wells, P.~Klausler, and H.~Liu, ``Energy
  proportional datacenter networks,'' in \emph{ISCA}, 2010, pp. 338--347.

\bibitem{Shang-Li}
Y.~Shang, D.~Li, and M.~Xu, ``Energy-aware routing in data center network,'' in
  \emph{Green Networking}, 2010, pp. 1--8.

\bibitem{Joe-Wong_Sen-2012}
C.~Joe-Wong, S.~Sen, T.~Lan, and M.~Chiang, ``Multi-resource allocation:
  Fairness-efficiency tradeoffs in a unifying framework,'' in \emph{INFOCOM},
  2012, pp. 1206--1214.

\bibitem{Wang_Li-2014}
W.~Wang, B.~Li, and B.~Liang, ``Dominant resource fairness in cloud computing
  systems with heterogeneous servers,'' in \emph{INFOCOM}, 2014, pp. 583--591.

\bibitem{Ghodsi_Zaharia-2012}
A.~Ghodsi, M.~Zaharia, B.~Hindman, A.~Konwinski, S.~Shenker, and I.~Stoica,
  ``Dominant resource fairness: Fair allocation of multiple resource types,''
  in \emph{NSDI}, 2011.

\bibitem{Ghodsi_Sekar-2012}
A.~Ghodsi, V.~Sekar, M.~Zaharia, and I.~Stoica, ``Multi-resource fair queueing
  for packet processing,'' in \emph{SIGCOMM}, 2012, pp. 1--12.

\bibitem{Wang_Liang-2014}
W.~Wang, B.~Liang, and B.~Li, ``Low complexity multi-resource fair queueing
  with bounded delay,'' in \emph{INFOCOM}, 2014, pp. 1914--1922.

\bibitem{Bansal-Caprara-2006}
N.~Bansal, A.~Caprara, and M.~Sviridenko, ``Improved approximation algorithms
  for multidimensional bin packing problems,'' in \emph{FOCS}, 2006, pp.
  697--708.

\bibitem{Andrews-Anto-2010}
M.~Andrews, S.~Antonakopoulos, and L.~Zhang, ``Minimum-cost network design with
  (dis)economies of scale,'' in \emph{FOCS}, 2010, pp. 585--592.

\bibitem{Krish-Naga-2014}
R.~Krishnaswamy, V.~Nagarajan, K.~Pruhs, and C.~Stein, ``Cluster before you
  hallucinate: approximating node-capacitated network design and energy
  efficient routing,'' in \emph{STOC}, 2014, pp. 734--743.

\bibitem{Woeginger-1997}
G.~J. Woeginger, ``There is no asymptotic ptas for two-dimensional vector
  packing,'' \emph{Inf. Process. Lett.}, vol.~64, no.~6, pp. 293--297, 1997.

\bibitem{vbp}
R.~Panigrahy, K.~Talwar, L.~Uyeda, and I.~Wieder, ``Heuristics for vector bin
  packing,'' in \emph{ESA}, 2011.

\bibitem{Guo_Lu-2009}
C.~Guo, G.~Lu, D.~Li, H.~Wu, X.~Zhang, Y.~Shi, C.~Tian, Y.~Zhang, and S.~Lu,
  ``Bcube: a high performance, server-centric network architecture for modular
  data centers,'' in \emph{SIGCOMM}, 2009, pp. 63--74.

\bibitem{Li_Wu-2014}
D.~Li and J.~Wu, ``On the design and analysis of data center network
  architectures for interconnecting dual-port servers,'' in \emph{INFOCOM},
  2014, pp. 1851--1859.

\bibitem{Sekar_Egi-2012}
V.~Sekar, N.~Egi, S.~Ratnasamy, M.~K. Reiter, and G.~Shi, ``Design and
  implementation of a consolidated middlebox architecture,'' in \emph{NSDI},
  2012, pp. 323--336.

\bibitem{Gember_Viswanathan-2014}
A.~Gember-Jacobson, R.~Viswanathan, C.~Prakash, R.~Grandl, J.~Khalid, S.~Das,
  and A.~Akella, ``Opennf:enabling innovation in network function control,'' in
  \emph{SIGCOMM}, 2014.

\end{thebibliography}

\end{document}